\documentclass[10pt]{article}
\usepackage{algorithm2e}
\usepackage{algorithmicx}
\usepackage{epsfig}
\usepackage{fancybox}
\usepackage{graphics}
\usepackage{amssymb}
\usepackage{geometry}
\usepackage{fancyhdr}
\usepackage{esvect}

\newtheorem{definition}{\textbf{Definition}}[section]

\newtheorem{theorem}{\textbf{Theorem}}[section]

\setlength{\parindent}{20pt}

\title{Planar $\beta$-skeletons via point location in monotone subdivisions
of subset of lunes
\thanks{This research is supported by the ESF EUROCORES programme EUROGIGA, CRP VORONOI.}}
\author{
        Miroslaw Kowaluk 
        \thanks{Institute of Informatics, University of Warsaw, Poland {\tt kowaluk@mimuw.edu.pl}}
}










\begin{document}

\maketitle

\begin{abstract}
We present a  new algorithm for lune-based $\beta$-skeletons for sets of $n$ points in the plane,
for  $\beta \in (2,\infty]$,  the only case when optimal algorithms are not known. 
The running time  of the algorithm is $O(n^{3/2} \log^{1/2} n)$, which is the best known
and is an improvement of Rao and Mukhopadhyay \cite{rm97} result.
The method is based on point location in monotonic subdivisions of arrangements of curve segments. 
\end{abstract}

\section{Introduction}




$\beta$-skeletons \cite{kr85} belong to the family of
proximity graphs, geometric graphs in which two vertices (points) produce an edge if and only if 
they satisfy particular geometric requirements. 
The proximity graphs are both important and popular because of many practical applications. 
They span a broad spectrum of areas, from earlier work on mathematical 
morphology on lattices and graphs, pattern recognition, geographic information systems, 
data mining to more recent applications in ad hoc wireless networks and  systems biology. 
Requirements defining proximity graphs can be formulated in many metrics, with the Euclidean metric 
being one of the most commonly used. 

\begin{definition}
For a given set $P$ containing $n$ points in $R^2$, distance function $d$ and the 
parameter $\beta$  we define $\beta$-skeleton as a graph $(P , E)$, in which $xy \in E$ 
iff no point in $P \setminus \{x,y\}$ belongs to $R(x,y, \beta)$, where 
\begin{itemize}
\item 
for $\beta = 0$, $R(x,y, \beta)$ is the segment $xy$;
\item 
for $0 < \beta < 1$, $R(x,y, \beta)$ is the intersection  of two discs with the radius 
$d(x,y)/2\beta$, which boundaries contain the both points $x$ and $y$. 
\end{itemize}
For  $1 \leq \beta \leq \infty$ there are two ways to define the region $R(x,y, \beta)$ leading to two different 
families of graphs. \\
The lune-based definition is as follows:
\begin{itemize}
\item 
For  $1 \leq \beta \leq \infty$ , $R(x,y, \beta)$ is a intersection of two discs with radius 
$\beta d(x,y)/2$ and centered in points $(1-\beta/2)x+(\beta/2)y$ and $(\beta/2)x+(1-\beta/2)y$, 
respectively.   
\item 
For $\beta = \infty$, $R(x,y, \beta)$ is an unbounded strip between two lines containing $x$ 
and $y$, respectively, and perpendicular to the segment $xy$. 
\end{itemize} 
The second definition defining a different family of graphs is circle-based.
\begin{itemize}
\item
For $1 \leq \beta < \infty$, $R(x,y, \beta)$ is the union  of two discs, each having diameter 
$\beta d(x,y)$ and having the segment $xy$ as its chord.
 \item
For $\beta = \infty$, $R(x,y, \beta)$ is the union of hyperplanes having the segment $xy$ on 
their border.
\end{itemize}

\end{definition}

\begin{figure}[hbt]
\begin{center}
\includegraphics[scale=0.3]{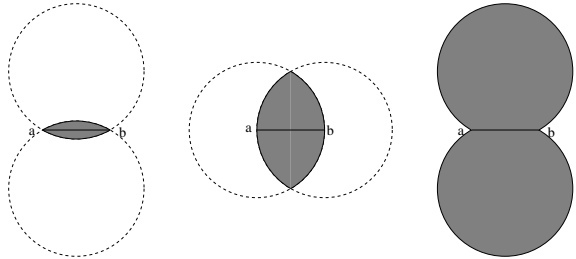}
\caption{ The region $R(a,b, \beta)$ for $ 0 < \beta < 1$ (left), the lune-based region 
$R(a,b, \beta)$ for $1 \leq \beta < \infty$ (middle) and the circle-based region $R(a,b, \beta)$
for $1 \leq \beta < \infty$ (right).  }
\label{fig:1}
\end{center}
\end{figure}

The open (closed, respectively) area $R(x,y, \beta)$ defines an open (closed, respectively) $\beta$-skeleton. 
A Gabriel Graph (GG) \cite{gs69} is a closed $1$-skeleton and a Relative Neighborhood Graph (RNG) 
\cite{t80}  is an open $2$-skeleton. \\


For $0 \leq \beta < 1$, there are in the worst case optimal $O(n^2)$ time algorithms \cite{hlm03} 
for both lune-based and circle-based 
$\beta$-skeletons.  Also, any $\beta$-skeleton for $1 \leq \beta \leq 2$ 
can be computed in the optimal $O(n \log n)$ time (\cite{jky89,l94,ms80,s83}). 
For circle-based 
$\beta$-skeletons, where $1 \leq \beta \leq \infty$, there also is an optimal $O(n \log n)$ time 
algorithm \cite{kr85}. 

For $\beta > 2$, the lune-based and circle-based regions $R(x, y, \beta)$ are
very different and it has been reflected by higher running times
of the best known algorithms for the lune-based $\beta$-skeletons.
In particular, for $2 < \beta \leq \infty$, the fastest algorithm that 
has been reported for computing 
lune-based $\beta$-skeletons requires $O(n^{3/2} \log {n})$ time \cite{rm97}.
The algorithm uses  the sweeping-line technique. \\
In this paper, we will show a $O(n^{3/2} \log^{1/2} n)$ 
time algorithm computing a lune-based 
$\beta$-skeleton for  $2 < \beta \leq \infty$. To this end, we will use 
arrangements of curve segments \cite{agr00} and data structure for point location
in monotone subdivisions \cite{egs86}. 

\section{Algorithm}

It is well known, see e.g. \cite{kr85}, that $\beta$-skeletons for $2 < \beta$ are subgraphs 
of the Delaunay triangulation of the input point sets. For a set $P$ of $n$ points,  they have size $O(n)$. 
Thus, to construct a $\beta$-skeleton, we can use their definition to eliminate edges that do not belong 
to $\beta$-skeleton by comparing edges of Delaunay triangulation against all of the points in the input set. 
To this end, we analyze lunes (for $2 < \beta < \infty$) 
or strips (for $\beta = \infty$) defined by the edges of the DT triangulation. \\
Specifically, we have to identify these lunes or strips that contain points
in $P$. \\
Let us consider $m$ edges of the Delaunay triangulation of $P$. 
$m$ will be determined later to minimize the running time. \\

\begin{figure}[hbt]
\begin{center}
\includegraphics[scale=0.4]{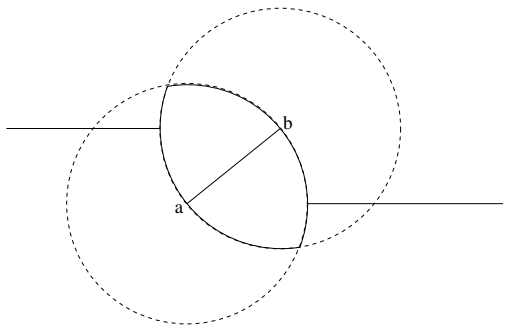}
\caption{ A subdivision of the plane into three monotone regions for one lune. }
\label{fig:2}
\end{center}
\end{figure}

We construct arrangement of lunes defined by these edges. 
Specifically, for each of the lunes, we draw two horizontal half-lines 
that start at the points with the smallest and, respectively, largest 
$x$-coordinates. They yield three monotone regions per lune (see Fig. \ref{fig:2}).    
The size of the arrangement is $O(m^2)$ 
and the time complexity of its construction is the same \cite{agr00}. \\
The intersection of monotone regions is monotone. The above construction 
results in a monotone subdivision of size $O(m^2)$ for the arrangement of size $O(m^2)$. \\
Then, we locate points of $P$ in regions of the subdivision. The construction time of an auxiliary data structure 
is linear with respect to a size of the subdivision \cite{egs86}.  Hence it can be done in $O(m^2)$ time and $O(m^2)$ 
space. A query time is $O(\log m)$. The results of querying are stored in respective cells of the subdivision 
data structure. \\
We traverse the dual graph of the subdivision in a Depth-First-Search fashion
in order to identify (and store) edges of the Delaunay triangulations that do not belong the $\beta$-skeleton 
and lunes (strips, respectively)
containing currently visited region of the subdivision. \\

\begin{figure}[hbt]
\begin{center}
\includegraphics[scale=0.3]{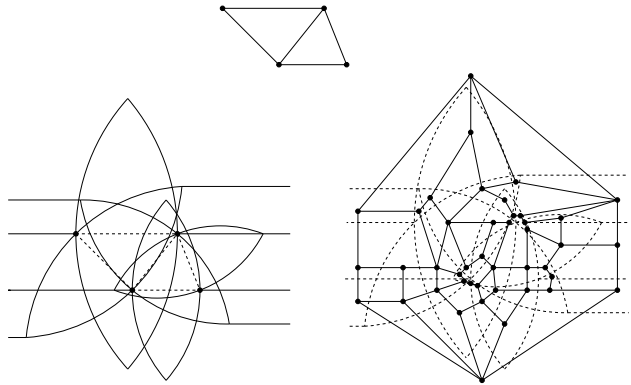}
\caption{ The Delaunay triangulation for a set of four points, the corresponding 
subdivision and its dual graph. }
\label{fig:3}
\end{center}
\end{figure}

The structure is a table of lunes (strips, respectively) forming the subdivision with 
\begin{itemize}
\item
records storing - value {\em true} when the respective lune contains some point of $P$ and {\em false} 
otherwise and
\item
pointers to a double linked list of lunes (strips, respectively) containing currently visited region which 
respective records are {\em false}.
\end{itemize}
Moving between two neighboring regions we visit or leave the lune or our status stays unchanged. \\
In the first case : 
\begin{itemize}
\item
if record of the auxiliary structure of this lune has value {\em false} we add the lune to the list 
in the structure; 
otherwise the list stays unchanged, 
\item 
if visited region contains a point of $P$ we change values of records of all lunes belonging to the list 
on {\em true} and erase the list.
\end{itemize}
In the second case we remove the lune form the list if its record has value {\em false}. \\
In the third case all data structures stay unchanged. \\
Visiting all regions of the subdivision take $O(m^2)$ time. \\
We repeat the above process for all the groups in the partition of the edges into
groups of size $m$.

\begin{theorem}
For $2 < \beta \leq \infty$, $\beta$-skeleton of a set $P$ of $n$ points in the plane can be constructed 
in $O(n^{3/2} \log^{1/2} n)$ time and $O(n \log n)$ space.
\end{theorem}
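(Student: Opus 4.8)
The plan is to convert the algorithm of Section~2 into a bound that is parameterised by the group size $m$, and then to choose $m$. First I would recall the two facts already established: the Delaunay triangulation $DT(P)$ is computable in $O(n\log n)$ time and $O(n)$ space, and for $\beta>2$ every $\beta$-skeleton edge is an edge of $DT(P)$, so there are only $O(n)$ candidate edges. Hence it suffices, for each candidate edge $xy$, to decide whether the lune $R(x,y,\beta)$ (the strip, when $\beta=\infty$) contains a point of $P\setminus\{x,y\}$. Splitting the candidate edges into $\lceil n/m\rceil$ groups of at most $m$ edges each reduces the whole task to $O(n/m)$ independent instances of the subproblem: given $m$ lunes and the $n$ points of $P$, mark every lune that contains at least one point.

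Next I would argue correctness of one instance. Each lune is convex, hence $x$-monotone, and the two horizontal half-lines drawn from its leftmost and rightmost vertices cut its exterior into an upper and a lower $x$-monotone region, giving three monotone regions per lune. Since the intersection of $x$-monotone regions is again $x$-monotone, overlaying these $m$ triple subdivisions yields a monotone subdivision $S$ in which every cell lies entirely inside or entirely outside each of the $m$ lunes, and passing to a dual-adjacent cell changes the set of covering lunes by exactly one. Consequently a point of $P$ located in a cell $c$ lies in a lune $L$ iff $c\subseteq L$, and the DFS over the dual graph of $S$ — maintaining the doubly linked list of still-unmarked lunes covering the current cell, and marking all of them whenever the current cell contains a point — flags a lune precisely when some cell contained in it holds a point of $P$; the unflagged lunes of the group are exactly those whose edges survive into the $\beta$-skeleton.

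For the cost of one instance: the overlay $S$ has $O(m^2)$ vertices, edges and faces and is built in $O(m^2)$ time and space \cite{agr00}; the point-location structure for the monotone subdivision $S$ is built in $O(m^2)$ time and space with $O(\log m)$ query time \cite{egs86}; locating the $n$ points costs $O(n\log m)$; and the DFS costs $O(m^2)$, since over the whole traversal every list insertion and deletion is charged to one of the $O(m^2)$ dual edges and the number of erasures never exceeds the number of insertions. Thus an instance runs in $O(m^2+n\log m)$ time and $O(m^2)$ space, and this space is reused by the next instance. Over the $O(n/m)$ instances the total time is $O(nm + (n^2/m)\log m)$ and the working space is $O(m^2)$, to which we add $O(n)$ for $DT(P)$, for the output, and for the per-candidate-edge ``survived'' flags.

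Finally I would balance the two time terms. With $m=\lceil\sqrt{n\log n}\,\rceil$ we have $\log m=\Theta(\log n)$, so $nm=O(n^{3/2}\log^{1/2}n)$ and $(n^2/m)\log m = O\!\left(\tfrac{n^2}{\sqrt{n\log n}}\cdot\log n\right)=O(n^{3/2}\log^{1/2}n)$; the Delaunay step $O(n\log n)$ is dominated, so the running time is $O(n^{3/2}\log^{1/2}n)$ and the space is $O(m^2+n)=O(n\log n)$, as claimed. The step I expect to need the most care is the correctness argument of the second paragraph: one must verify that the overlay really is a monotone subdivision each of whose faces lies on one side of every lune (so that ``a cell contains a point of a lune'' is well-defined) and that the list maintained during the DFS always coincides with the set of currently covering lunes that are still unmarked; once these invariants are established, the remaining estimates are routine accounting. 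The case $\beta=\infty$ is handled identically, with strips and their bounding half-lines in place of lunes and circular arcs.
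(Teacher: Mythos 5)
Your proposal is correct and follows essentially the same route as the paper: partition the $O(n)$ Delaunay edges into groups of $m$, build the $O(m^2)$-size monotone subdivision of the lune arrangement via horizontal half-lines, locate the $n$ points with the Edelsbrunner--Guibas--Stolfi structure, mark lunes by a DFS over the dual graph, and balance $O(nm)$ against $O((n^2/m)\log m)$ with $m=(n\log n)^{1/2}$ to get $O(n^{3/2}\log^{1/2}n)$ time and $O(n\log n)$ space. Your write-up is in fact more careful than the paper's (the DFS invariant and the remark that crossing an auxiliary half-line edge may leave the covering set unchanged deserve the attention you give them), but it is the same algorithm and the same accounting.
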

\begin{proof}
The time complexity of every steps of the algorithm is 
$O(n \log n) + n/m(O(m^2) + O(m^2) + O(n \log m) + O(m^2)) =  O(n \log n) + n/m(O(m^2) + O(n \log m))$. 
The value of this function is minimal for $m = (n \log n)^{1/2}$ . Hence the time complexity of the algorithm 
is $O(n^{3/2} \log^{1/2} n)$. Space complexity of the algorithm is $O(n) + O(m^2) = O(n \log n)$ .  
\end{proof}


\section{Conclusion}


We have presented a new algorithm for the lune based $\beta$-skeletons, where $\beta > 2$. For circle-based skeletons the approach would work as well, 
however better algorithms are known in that case.
Although the algorithm provides only an incremental improvement
compared to the previous one, the improvement is interesting due to
a tantalizing jump in the algorithmic complexity of the $\beta$-skeleton problem
that occurs for $\beta = 2$. While for $\beta \leq 2$ optimal
algorithms are known,  the optimality is open for $\beta > 2$.
The presented algorithm narrows the gap, however more research is needed to
better understand the reasons for this apparent difficulty for $\beta > 2$.

Finally, we believe that the point location approach can be also used for
metrics other than Eucliedean. 




\small 
\bibliographystyle{abbrv}

\begin{thebibliography}{99}


\bibitem{agr00}
N. M. ~Amato, M. T. ~Goodrich, E. A. ~Ramos. 
\newblock {\em Computing the arrangement of curve segments: Divide-and-conquer algorithms via sampling.}
\newblock Proc. 11th ACM-SIAM Sympos. Discrete Algorithms, 2000, 705-706. 

\bibitem{ccl09}
J. ~Cardinal, S. ~Collette, S. ~Langerman. 
\newblock {\em Empty region graphs.}
\newblock Comput. Geom. 42(3), 2009, pp. 183-195. 

\bibitem{egs86}
H. ~Edelsbrunner, L. J. ~Guibas, J. ~Stolfi. 
\newblock {\em Optimal point location in a monotone subdivision.}
\newblock SIAM Journal on Computing, 15(2):317-340, 1986. 

\bibitem{gs69}
K.R. ~Gabriel, R.R. ~Sokal. 
\newblock {\em A new statistical approach to geographic variation analysis.}
\newblock Systematic Zoology 18 (1969) 259-278. 

\bibitem{hlm03}
F. ~Hurtado, G. ~Liotta, H. ~Meijer. 
\newblock {\em Optimal and suboptimal robust algorithms for proximity graphs.}
\newblock Computational Geometry 25 (2003), pp. 35-49. 

\bibitem{jky89}
J.W. ~Jaromczyk, M. ~Kowaluk, F. ~Yao. 
\newblock {\em An optimal algorithm for constructing $\beta$-skeletons in $l_p$ metric.}
\newblock Manuscript, 1989. 

\bibitem{kr85}
D.G. ~Kirkpatrick, J.D. ~Radke. 
\newblock {\em A framework for computational morphology.}
\newblock in: G.T. Toussaint (Ed.), Computational Geometry, North-Holland, Amsterdam, 1985, pp. 217-248. 

\bibitem{l94}
A. ~Lingas. 
\newblock {\em A linear-time construction of the relative neighborhood graph from the Delaunay triangulation.}
\newblock Computational Geometry 4 (1994) 199-208. 

\bibitem{ms80}
D.W. ~Matula, R.R. ~Sokal. 
\newblock {\em Properties of Gabriel graphs relevant to geographic variation research and clustering of points 
in the plane.}
\newblock Geogr. Anal. 12 (3) (1980) 205-222. 

\bibitem{rm97}
S.V. ~Rao, A. ~Mukhopadhyay. 
\newblock {\em Fast algorithm for computing $\beta$-skeletons and their relatives.}
\newblock Proc. 8th Annu. Int. Symp. on Algorithms and Computation, in: Lecture Notes Comput. Sci., 
Vol. 1350, 1997, pp. 374-383. 

\bibitem{s83}
K.J. ~Supowit. 
\newblock {\em The relative neighborhood graph with an application to minimum spanning trees.}
\newblock J. ACM 30 (3) (1983) 428-448. 

\bibitem{t80}
G. ~Toussaint. 
\newblock {\em The relative neighborhood graph of a finite planar set.}
\newblock Pattern Recognition 12 (1980), 261-268. 

\end{thebibliography}

\end{document}